\documentclass{siamart0216}



\usepackage{amsfonts}
\usepackage{mathtools}
\usepackage{graphicx}
\usepackage[caption=false]{subfig}
\usepackage{epstopdf}
\usepackage{algorithmic}
\usepackage{multirow}
\usepackage{booktabs}
\usepackage{tabularx}
\usepackage{pgfplots}
\usepgfplotslibrary{groupplots}
\pgfplotsset{width=7cm,compat=newest}
\usepackage{tikz}
\usetikzlibrary{calc, shapes, positioning, decorations.pathreplacing}
\ifpdf
  \DeclareGraphicsExtensions{.eps,.pdf,.png,.jpg}
\else
  \DeclareGraphicsExtensions{.eps}
\fi
\hypersetup{pdfstartview={FitH}}
\pgfplotsset{
	cycle list={
		{red,mark=*},
		{blue,mark=square*},
		{orange,mark=triangle*},
		{gray,mark=diamond*},
	},
}
\setlength{\tabcolsep}{5pt}


\newcommand{\TheTitle}{AMPS: An Augmented Matrix Formulation for Principal Submatrix
Updates with Application to Power Grids} 
\newcommand{\TheAuthors}{Y.-H. Yeung, A. Pothen, M. Halappanavar, and Z. Huang}

\headers{Augmented Matrix Formulation for Principal Submatrix Updates}{\TheAuthors}

\title{{\TheTitle}\thanks{This work was supported in part by 
the National Science Foundation grant CCF-1552323
and by the Applied Mathematics Program within the Office of Science of the 
U.S. Department of Energy by grant DE-SC0010205 at Purdue, and 
at the Pacific Northwest National Laboratory, 
operated by Battelle for the  DOE under Contract DE-AC05-76RL01830. 
}}

\author{
  Yu-Hong Yeung\thanks{Department of Computer Science, Purdue University, West Lafayette, IN
		(\email{yyeung@purdue.edu}, \email{apothen@purdue.edu}).}
  \and
	Alex Pothen\footnotemark[2]
	\and
	Mahantesh Halappanavar\thanks{Pacific Northwest National Laboratory, Richland, WA
	(\email{hala@pnnl.gov}, \email{Zhenyu.Huang@pnnl.gov}).} 
	\and 
	Zhenyu Huang\footnotemark[3]
}

\usepackage{amsopn}

\newcommand\closure[2]{\mathop{\mathrm{closure}_{#1}\left(#2\right)}}
\newcommand\struct[1]{\mathop{\mathrm{struct}\left(#1\right)}}
\newcommand\vb{\boldsymbol}


\ifpdf
\hypersetup{
  pdftitle={\TheTitle},
  pdfauthor={\TheAuthors}
}
\fi




\begin{document}

\maketitle

\begin{abstract}
We present AMPS, an augmented matrix approach to update the solution
to a linear system of equations when the matrix is modified 
by a few elements within a principal submatrix. 
This problem arises in the dynamic security  analysis of a power grid, 
where operators need to perform $N-k$ contingency analysis, 
i.e., determine the state of the system when exactly $k$ links from $N$ fail.  
Our algorithms augment the matrix to account for the changes 
in it, and then  compute the solution to the
augmented system without refactoring the modified matrix. 
We provide two algorithms, a direct method,
and a hybrid direct-iterative method for solving the augmented system. 
We also exploit the sparsity of the matrices and vectors to accelerate the
overall computation. 
We analyze the time complexity of both algorithms, and show 
that it is bounded by the number of nonzeros in a subset of the columns of the
Cholesky factor that are selected by the nonzeros in the sparse right-hand-side vector.  
Our algorithms are compared on three power grids with PARDISO, 
a parallel direct solver, and CHOLMOD, a direct solver with the 
ability to modify the Cholesky factors of the matrix. 
We show that our augmented algorithms outperform  PARDISO 
(by two orders of magnitude), and CHOLMOD (by a factor of up to 5). 
Further, our algorithms scale better 
than CHOLMOD as the number of elements updated increases. 
The solutions are computed with high accuracy. 
Our algorithms are capable of computing $N-k$ contingency analysis on 
a 778 thousand bus grid, updating a solution with $k=20$ elements 
in $16$ milliseconds on an Intel Xeon processor.
\end{abstract}

\begin{keywords}
  direct methods, iterative methods,
	augmented matrix,  sparse matrices, 
	matrix updates, powerflow analysis, contingency analysis
\end{keywords}

\begin{AMS}
   65F50, 65F10, 65F05, 65Y20
\end{AMS}

\section{Introduction}
\label{sec:introduction}
We consider updating the solution to a system of equations $ A\vb{x} = \vb{b}$,
where $A$ is a symmetric positive definite or indefinite $n \times n$ matrix
and $\vb{b}$ is an $n$-vector, when a low-rank change is made to $A$. 
The change we consider is an update of a principal submatrix of the form
\begin{equation}
	\hat{A} = A - H E H^\top,\label{eq:update}
\end{equation}
where $E$ is a symmetric $m \times m$ matrix, and $H$ is an $n \times m$
submatrix of an identity matrix, and $m \ll n$.
Since both $A$ and $E$ are symmetric, $\hat{A}$ is also symmetric. Note
that the dimension of the matrix does not change when it is updated.
We wish to compute the solution to the updated system 
\begin{equation}
	\hat{A}\hat{\vb{x}} = \hat{\vb{b}}.\label{eq:mod}
\end{equation}
We describe an augmented matrix approach to the solution of the updated system,
in which the augmented matrix is a block $3 \times 3$ matrix whose
$(1,1)$-block is the original matrix $A$, and the updates to $A$
are represented by the other submatrices of the block matrix. We describe two
algorithms to solve this augmented system. In both algorithms,
the original matrix $A$ is factored with a direct method. 
In the first algorithm, the Schur complement system is also solved by 
a direct method, and in the second algorithm it is solved with a 
Krylov subspace solver.  
We maintain symmetry in the augmented system of equations and the second
algorithm whereas in the first algorithm an unsymmetric system is solved 
to reduce the computation time.
Note that our algorithms can handle arbitrary changes to $\hat{\vb{b}}$
in \cref{eq:mod}. However, in the power grid application considered here,
$\hat{\vb{b}}$ only changes in the set of $m$ rows where the principal submatrix
is updated. Hence we focus on this situation in our experiments.

Our motivation for this work comes from dynamically assessing the 
security of power grids, which is also  called contingency analysis.
In power engineering, an interconnected power system is described by a system 
of complex, nonlinear equations representing the relationship between 
voltages, powers and admittances at points of interest called buses. 
Here, we consider the ``DC'' approximation of this problem, 
which is derived using heuristic assumptions, and is 
described by a linear system,
\begin{equation}
	-B\vb{d} = \vb{p},\label{eq:DC}
\end{equation}
where $B$ is the imaginary component of the $n \times n$ admittance matrix, 
$\vb{d}$ is an $n$-vector of the relative phase shift of the voltage,  
$\vb{p}$ is an $n$-vector of the real power, and  
$n$ is the number of buses in the system. 
In contingency analysis, one removes an existing connection 
between two buses in the system to simulate the failure to transmit power 
through that transmission line, or all the connections to a generator to 
simulate the failure to generate power from it. 
Removing a connection in the system corresponds to a principal submatrix 
update to \cref{eq:DC},  
and the updated matrix has the same size $n$ as the original matrix.
Bienstock discusses a mixed-integer programming approach to this problem
\cite{bienstock}, which restricts the size of the problems they can 
solve to a few hundred buses.   

We propose AMPS, an augmented system that is equivalent to \cref{eq:mod}, 
which means in exact arithmetic solving the augmented system would give us 
the same solution vector $\hat{\vb{x}}$. 
Our experimental  results show that the accuracy of the solution to 
the augmented system is comparable to  that of the solution $\hat{\vb{x}}$ 
obtained by solving \cref{eq:mod} by a direct method.

Our algorithm satisfies the following four desiderata:
\begin{enumerate}
	\item The solution of the augmented system should be computed in a number of
	 operations proportional to the size of the update $m$ rather than the size of
	 the system $n$. This is especially important for large systems when there is
	 a need for a sequence of updates in real-time.
	\item The accuracy of the solution to the augmented system should be
	 comparable to that of the direct solution of the modified system.
	\item Both the factors of the matrix and the solution of the original system
	 should be utilized in solving the augmented system to avoid redundant 
	 computations.
	\item Sparsity in the matrices and the vectors should be exploited to
	 accelerate the computations. 
\end{enumerate}

The work most closely related to this paper is an augmented matrix approach to
solving the stiffness system of equations in a surgery stimulation when an organ
is cut or deformed, proposed by Yeung, Crouch and Pothen~\cite{yeung}.
The surgery is visualized by updating a finite element formulation of a linear
elastic model of the organ as it is cut. The matrix here is the (varying)
stiffness matrix from the finite element model of the organ. For surgery
simulations, solutions of tens or hundreds of modified systems per
second are needed. With the augmented matrix approach, the stiffness
matrix of the initial mesh can be kept unchanged, and all changes as the mesh
is being cut can be described using the $(1,2)$- and $(2,1)$-blocks of a 
block $2 \times 2$ matrix.
In this problem, nodes and elements could be deleted, added, or replaced,
and thus the dimension of the matrix changes, unlike the situation here. 
These authors used an unsymmetric form of the augmented matrix with a hybrid
direct-iterative algorithm, where a direct method was used to factor
the initial stiffness matrix, and the Schur complement system was solved
implicitly using a Krylov space solver. There are two major differences here. 
The first is that the update is restricted to a principal submatrix in the
power grid context. The second is that symmetry is preserved while it was
destroyed in the earlier method even though both the matrix and the updates
were symmetric. There are other existing augmented matrix approaches, which
will be discussed later in this paper.

\paragraph{Notation}
We use Householder notation throughout; that is, matrices are denoted by upper
case Roman letters, vectors by lower case Roman letters, and scalars by Greek
letters. There are some exceptions: Indices and dimensions are also denoted by
lower case Roman letters (e.g. $i$, $j$, $k$ and $m$, $n$). With this
convention, the elements of a matrix $A$ are denoted by $\alpha_{ij}$, and the
elements of a vector $\vb{x}$ are denoted by  $\chi_j$. A submatrix of $A$ is
denoted by $A_{ij}$, and a subvector of $\vb{x}$ is denoted by $\vb{x}_j$. 
We use $A^\top$ to denote the transpose of $A$. The symbols $L$ and $D$ are
reserved for lower triangular and (block) diagonal matrices. 
The $j$th column of the identity matrix $I$ is written as $\vb{e}_j$,
and thus the matrix $H$ in \cref{eq:update} is
$H = \left[\vb{e}_{j_1},\vb{e}_{j_2},\ldots,\vb{e}_{j_m}\right]$ for the set
of indices of the modified rows and columns
$\mathbb{S} = \{j_1, j_2, \ldots, j_m\}$.

\paragraph{Organization of this article}
\Cref{sec:formulation} presents our new augmented system of equations for
solving the modified system when a principal submatrix is updated. 
\Cref{sec:method} describes the details of the algorithm to solve the modified 
system using the augmented formulation presented in the previous section.
\Cref{sec:results} presents computational times and the accuracy 
of solutions when the augmented system is applied to contingency analysis of
three power grids. 
\Cref{sec:conclusions} discusses conclusions and directions for future work.

\section{Augmented system formulation}
\label{sec:formulation}
It is well known that augmented systems can be used to effectively add
and remove rows and columns of matrices~\cite{bisschop1977,gill}. 
We begin by describing how these operations are accomplished, 
assuming that both the original matrix and the modifications are symmetric, 
i.e., the procedures are applied to rows and columns simultaneously. 
These modifications are not restricted to principal submatrix updates.
Also, these modifications might not preserve the nonsingularity 
of the matrix. 
Hence after each update, we characterize the conditions that 
must be satisfied for the updated matrix to be nonsingular
when the initial matrix is nonsingular. 
These results are obtained using the determinantal identity 
\[
	\det{\begin{bmatrix} A&B \\ C&D \end{bmatrix}} = \det{(A)} \det{(D - CA^{-1} B)},
\]
when $A$ is nonsingular. 
The goal of these characterizations is to show that our 
augmented system formulation
by itself does not create singular matrices.

\subsection{Adding a row and a column}
To add a row and a column to $A\vb{x} = \vb{b}$, we consider the system
\begin{equation}
	\begin{bmatrix}
		A & \vb{\hat{a}}\\
		\vb{\hat{a}}^\top & \hat{\alpha}
	\end{bmatrix} \begin{bmatrix}
		\vb{\hat{x}}_1 \\ \hat{\chi}
	\end{bmatrix} = \begin{bmatrix}
		\vb{b} \\ \hat{\beta}
	\end{bmatrix}.
\end{equation}
If $A$ is nonsingular and 
$\hat{\alpha} \neq \vb{\hat{a}}^\top A^{-1} \vb{\hat{a}}$, 
then the augmented matrix is nonsingular;  
and if $A$ is positive definite and 
$\hat{\alpha} > \vb{\hat{a}}^\top A^{-1} \vb{\hat{a}}$, 
then the augmented matrix is also positive definite. 

\subsection{Removing a row and a column}
To remove the $j$th row and column from $A\vb{x} = \vb{b}$,
we consider the system
\begin{equation}
	\begin{bmatrix}
		A & \vb{e}_j\\
		\vb{e}_j^\top & 0
	\end{bmatrix} \begin{bmatrix}
		\vb{\hat{x}}_1 \\ \hat{\chi}
	\end{bmatrix} = \begin{bmatrix}
		\vb{b} \\ 0
	\end{bmatrix}.\label{eq:remove}
\end{equation}
The last row $\vb{e}_j^\top \vb{\hat{x}}_1 = 0$ constrains the $j$th
component of $\vb{\hat{x}}_1$ to be $0$, and consequently removes the
contribution of the $j$th column of $A$. This leaves us with one fewer
\textit{effective} variable than the number of equations. This is compensated
by the additional component $\hat{\chi}$ in the solution vector. 
Consider the $j$th row of the augmented system: 
$\vb{e}_j^\top A \vb{\hat{x}}_1 + \hat{\chi} = \vb{e}_j^\top\vb{b}$.
Since $\hat{\chi}$ only appears in the $j^\text{th}$ row of the system, it is
constrained to the value $\vb{e}_j^\top(\vb{b} - A\vb{\hat{x}}_1)$
after all the other components of $\vb{\hat{x}}_1$ are determined. 
Its value will be discarded after the system is solved.

Augmentation in this manner makes the matrix symmetric indefinite. If $A$ is a
symmetric positive definite matrix, then we can show that the augmented matrix
is nonsingular, since its determinant is equal to $-\det(A)\, (A^{-1})_{jj}$,
and both terms are positive. 

\subsection{Replacing a row and a column}
Replacing a row and a column can be done by removing the old row and column and
adding the new ones. The resulting augmented formulation would be
\begin{equation}
	\begin{bmatrix}
		A & \vb{\hat{a}}_j & \vb{e}_j\\
		\vb{\hat{a}}_j^\top & \hat{\alpha}_{jj} & 0\\
		 \vb{e}_j^\top & 0 & 0
	\end{bmatrix} \begin{bmatrix}
		\vb{\hat{x}}_1 \\ \hat{\chi}_1 \\ \hat{\chi}_2
	\end{bmatrix} = \begin{bmatrix}
		\vb{b} \\ \hat{\beta} \\ 0
	\end{bmatrix},
\end{equation}
where $\vb{\hat{a}}_j$ and $\hat{\alpha}_{jj}$ are the $j$th column and the
$(j,j)$th element of $\hat{A}$ in \cref{eq:mod} respectively. Note that the
$j$th component of $\vb{\hat{a}}_j$ is then multiplied by the
$j$th component of $\vb{\hat{x}}_1$ which is constrained to be $0$ 
by the last equation. 
Hence the $j$th component of $\vb{\hat{a}}_j$ can be chosen arbitrarily.

We can calculate the determinant of the 
augmented matrix as 
\[
	\det{(A)} \ \det{\begin{bmatrix}
			\hat{\alpha}_{jj} - \vb{\hat{a}}_j^\top A^{-1} \vb{\hat{a}}_j &
			-\vb{\hat{a}}_j^\top A^{-1} \vb{e}_j \\ 
			-\vb{e}_j^\top A^{-1} \vb{\hat{a}}_j  &
			-\vb{e}_j^\top A^{-1} \vb{e}_j 
	\end{bmatrix}}.
\]
Hence if $A$ and the $2 \times 2$ matrix above are both nonsingular,
the augmented matrix is also nonsingular.

\subsection{Replacing multiple rows and columns}
Replacing $m$ rows and\linebreak columns can be done by concatenating the
replaced rows and columns. Suppose the set of indices of the rows and columns
to be replaced is $\mathbb{S} = \{j_1, j_2, \ldots, j_m\}$. The complete
augmented formulation would be
\begin{equation}
	\begin{bmatrix}
		A & J & H\\
		J^\top & C & 0\\
		H^\top & 0 & 0
	\end{bmatrix} \begin{bmatrix}
		\vb{\hat{x}}_1 \\ \vb{\hat{x}}_2 \\ \vb{\hat{x}}_3
	\end{bmatrix} = \begin{bmatrix}
		\vb{b} \\ H^\top\vb{\hat{b}} \\ \vb{0}
	\end{bmatrix},
	\label{eq:replace}
\end{equation}
where $J = \left[\vb{\hat{a}}_{j_1}, \vb{\hat{a}}_{j_2}, \ldots, \vb{\hat{a}}_{j_m}\right]$
are the modified columns of $\hat{A}$,
$H = \left[\vb{e}_{j_1}, \vb{e}_{j_2}, \ldots, \vb{e}_{j_m}\right]$ is the
submatrix of the identity matrix with the indices of the columns to be
replaced, and $C = H^\top \hat{A} H$ is the diagonal block of
the modified matrix $\hat{A}$ where the changes occur.
Note that the submatrix $C$ is $m \times m$, $J$ and $H$ are $n \times m$,
and $m \ll n$.

Again, if $A$ is nonsingular, we can express the determinant of the 
augmented matrix as 
\[	\det{(A)} \ \det{
\begin{bmatrix} C - J^\top A^{-1} J  & - J^\top A^{-1} H \\ 
- H^\top A^{-1} J & - H^\top A^{-1} H 
\end{bmatrix}}.
\]
If $A$ and the second matrix above are both nonsingular, then the augmented
matrix is also nonsingular. (We can choose $J = AH$ as shown later
in this section; then the block $2 \times 2$ matrix above is the negation of
the Schur complement matrix $S_1$ in the iterative variant of our AMPS
algorithm in \Cref{sec:method}.) 
In other words, the augmented matrix is
nonsingular if and only if both $A$ and the Schur complement matrix $S_1$ are
nonsingular.

We proceed to refine the system of equations \cref{eq:replace} further. 
With a suitable $n \times n$ permutation matrix $P$, we can partition $H$ into 
an identity matrix and a zero matrix:
\begin{equation}
	P H = \begin{bmatrix}
		I_m \\ 0_{n-m}
	\end{bmatrix}.\label{eq:PH}
\end{equation}
Applying the same permutation matrix $P$ to $J$, $A$, $\vb{\hat{x}}_1$ and
$\vb{b}$ yields
\begin{subequations}
\begin{align}
	P J &= \begin{bmatrix}
		J_1 \\ J_2
	\end{bmatrix},
	&P A P^\top &= \begin{bmatrix}
		A_{11} & A_{12}\\
		A_{12}^\top & A_{22}
	\end{bmatrix},\\
	P \vb{\hat{x}}_1 &= \begin{bmatrix}
		\vb{\hat{x}}_{11} \\ \vb{\hat{x}}_{12}
	\end{bmatrix},\label{eq:Px1}
	&P \vb{b} &= \begin{bmatrix}
		\vb{b}_1 \\ \vb{b}_2
	\end{bmatrix}.
\end{align}
\end{subequations}
We can then apply the permutation matrix
\begin{equation}
	\hat{P} = \begin{bmatrix}
		P \\ & I_m \\ & & I_m
	\end{bmatrix}
\end{equation}
to the matrix in \cref{eq:replace} from both left and right, which yields
\begin{equation}
	\begin{bmatrix}
		A_{11} & A_{12} & J_1 & I\\
		A_{12}^\top & A_{22} & J_2 & 0\\
		J_1^\top & J_2^\top & C & 0\\
		I & 0 & 0 & 0
	\end{bmatrix} \begin{bmatrix}
		\vb{\hat{x}}_{11} \\ \vb{\hat{x}}_{12} \\ \vb{\hat{x}}_2 \\ \vb{\hat{x}}_3
	\end{bmatrix} = \begin{bmatrix}
		\vb{b}_1 \\ \vb{b}_2 \\ H^\top\vb{\hat{b}} \\ \vb{0}
	\end{bmatrix}.
	\label{eq:principal}
\end{equation}
Here $A_{11}$ is the $m \times m$ submatrix being replaced by $C$, $A_{22}$ is
the $(n-m) \times (n-m)$ principal submatrix of $A$ that is unchanged, and
$A_{12}$ is the $m \times (n-m)$ off-diagonal submatrix of $A$. Note that the
third column block effectively replaces the first column block, and by symmetry
in the update, the third row block also replaces the first row block. Hence,
the submatrix $\displaystyle\begin{bmatrix}J_2\\C\end{bmatrix}$ must consist of
the modified columns in $\hat{A}$ that correspond to the original columns
$\displaystyle\begin{bmatrix}A_{12}^\top\\A_{11}\end{bmatrix}$ in $A$.

\begin{lemma}\label{lm:equiv}
	The submatrix $J_1$ in \cref{eq:principal} can be chosen arbitrarily such that
	the system is always consistent. Moreover, if $\hat{A}$ in \cref{eq:mod} is
	nonsingular, $\vb{\hat{x}}_{12}$ and $\vb{\hat{x}}_2$ are independent of 
	the submatrix $J_1$.
\end{lemma}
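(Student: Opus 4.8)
The plan is to solve \cref{eq:principal} by block elimination, carried out in the order that makes the role of $J_1$ transparent, and then to read off both assertions. First I would expand the four block rows. The last one gives $\vb{\hat{x}}_{11} = \vb{0}$ at once, which annihilates the term $J_1^\top \vb{\hat{x}}_{11}$ in the third block row; consequently $J_1$ survives in only one place, the first block row $A_{12}\vb{\hat{x}}_{12} + J_1 \vb{\hat{x}}_2 + \vb{\hat{x}}_3 = \vb{b}_1$. The key observation is that $\vb{\hat{x}}_3$ occurs nowhere else and enters this row through the identity block, so for \emph{any} matrix $J_1$ and any values of $\vb{\hat{x}}_{12}$ and $\vb{\hat{x}}_2$ the row is satisfied by the choice $\vb{\hat{x}}_3 = \vb{b}_1 - A_{12}\vb{\hat{x}}_{12} - J_1 \vb{\hat{x}}_2$. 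Hence the first block row never obstructs solvability, the consistency of the full system is inherited from the remaining equations, and those are free of $J_1$; so $J_1$ may be chosen arbitrarily without affecting consistency, which establishes the first assertion.

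Next I would isolate the subsystem that actually determines $\vb{\hat{x}}_{12}$ and $\vb{\hat{x}}_2$. Substituting $\vb{\hat{x}}_{11} = \vb{0}$ into the second and third block rows leaves
\[
	\begin{bmatrix} A_{22} & J_2 \\ J_2^\top & C \end{bmatrix}
	\begin{bmatrix} \vb{\hat{x}}_{12} \\ \vb{\hat{x}}_2 \end{bmatrix}
	= \begin{bmatrix} \vb{b}_2 \\ H^\top \vb{\hat{b}} \end{bmatrix},
\]
in which $J_1$ does not appear. Whatever this $2 \times 2$ block system pins down is therefore manifestly independent of $J_1$, and it remains only to show that it is pinned down uniquely whenever $\hat{A}$ is nonsingular. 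For this I would identify the coefficient matrix with $\hat{A}$ up to a symmetric permutation: using $J = \hat{A}H$, $C = H^\top \hat{A} H$, and \cref{eq:PH}, the permuted modified matrix is $P \hat{A} P^\top = \left[\begin{smallmatrix} C & J_2^\top \\ J_2 & A_{22} \end{smallmatrix}\right]$, and the matrix above is obtained from it by the symmetric block swap that interchanges the first $m$ coordinates with the remaining $n-m$. A congruence by a permutation matrix leaves the determinant unchanged, so the reduced matrix is nonsingular exactly when $\hat{A}$ is. Combining the pieces: if $\hat{A}$ is nonsingular, the reduced system has a unique solution $(\vb{\hat{x}}_{12}, \vb{\hat{x}}_2)$, and because that system contains no $J_1$, this solution is independent of $J_1$, which is the second assertion.

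The step I expect to be the main obstacle is the last identification, namely matching the blocks $C$ and $J_2$ of the augmented matrix to the corresponding blocks of $\hat{A}$ (through $J = \hat{A}H$ and $C = H^\top \hat{A} H$) and verifying that the reduced matrix is a genuine symmetric reordering of $\hat{A}$, so that nonsingularity transfers. The block eliminations themselves are routine once $\vb{\hat{x}}_{11} = \vb{0}$ is used; the one place to be careful is not to conflate the modified off-diagonal block $J_2$ with the original block $A_{12}^\top$, since these coincide only in the special principal-submatrix update \cref{eq:update}, whereas \cref{eq:principal} is written for the general replacement.
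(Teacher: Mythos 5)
Your proposal is correct and follows essentially the same route as the paper's proof: use the last block row to force $\vb{\hat{x}}_{11} = \vb{0}$, observe that $J_1$ then survives only in the first block row where $\vb{\hat{x}}_3$ absorbs it through the identity block, and extract the $J_1$-free two-by-two block subsystem in $(\vb{\hat{x}}_{12}, \vb{\hat{x}}_2)$, which is a symmetric permutation of the modified system \cref{eq:mod}. Your explicit verification that this reduced matrix is $P\hat{A}P^\top$ up to a block swap is slightly more detailed than the paper's parenthetical remark, but the argument is the same.
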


\begin{proof}
	Consider the last row block of \cref{eq:principal}. We have
	$\vb{\hat{x}}_{11} = \vb{0}$. Consequently, the first column block, which then
	multiplies $\vb{\hat{x}}_{11}$, does not contribute to the solution of the
	system. Moreover, consider the first row block of \cref{eq:principal}:
	\begin{equation}
		A_{12}\vb{\hat{x}}_{12} + J_1\vb{\hat{x}}_2 + \vb{\hat{x}}_3 = \vb{b}_1.
		\label{eq:coupled}
	\end{equation}
	Since $\vb{\hat{x}}_3$ only contributes to one row block in the system of
	equations, its values can be determined uniquely for any values of $J_1$.
	Hence the submatrix $J_1$ can be chosen arbitrarily. 

	Now we can prove the second statement in the lemma. If we consider the second
	and third row and column blocks of system~\cref{eq:principal}, since the
	last column blocks are zero for these rows, we have, after row and column
	permutations, 
	\begin{equation}
		\begin{bmatrix}
      C  & J_2^\top \\
			J_2 & A_{22}
		\end{bmatrix} \begin{bmatrix}
			\vb{\hat{x}}_{2} \\ \vb{\hat{x}}_{12}
		\end{bmatrix} = \begin{bmatrix}
			H^\top\vb{\hat{b}} \\ \vb{b}_2  
		\end{bmatrix}. \label{eq:lemma}
	\end{equation}
	(This system is the  updated $n \times n$ system of equations \cref{eq:mod}
	written in its  block $2 \times 2$ form.) Hence the vectors
	$\vb{\hat{x}}_{12}$ and $\vb{\hat{x}}_2$ are independent of the submatrix $J_1$.
\end{proof}

Note that the values of $\vb{\hat{x}}_2$ and $\vb{\hat{x}}_3$ are coupled
in \cref{eq:coupled}, i.e., we can express one in terms of the other.
Therefore, we need only one of them when solving the updated solution
$\vb{\hat{x}}$ in \cref{eq:mod}.

Since we have applied the permutation to the solution vector in
\cref{eq:principal}, we need to unpermute it to obtain the updated solution
$\vb{\hat{x}}$ in \cref{eq:mod}.
Hence we obtain 
\begin{equation}
	P\vb{\hat{x}} = \begin{bmatrix}
		\vb{\hat{x}}_{2} \\ \vb{\hat{x}}_{12}
	\end{bmatrix}.\label{eq:updated-solution}
\end{equation}

\subsection{Principal submatrix update}
We now extend the techniques described in the previous subsection to design
an augmented matrix approach to update the solution when $A$ is modified by a
principal submatrix update as in \cref{eq:update}. In this case, all the
changes are captured in the submatrix $C$ in \cref{eq:replace}, and we can
deduce that $C = H^\top \hat{A}H = H^\top AH - E$. Therefore, the submatrix
$J_2$ in \cref{eq:principal} remains unchanged from the original system and thus
$J_2 = A_{12}^\top$. As proven in the previous subsection, $J_1$ in
\cref{eq:principal} can be chosen arbitrarily. With the choice of $J_1 = A_{11}$,
we can show that 
\begin{equation}
	J = P^\top\begin{bmatrix}
		J_1 \\ J_2
	\end{bmatrix} = P^\top\begin{bmatrix}
		A_{11} \\ A_{12}^\top
	\end{bmatrix} = AH. 
\end{equation}
The last equation follows from 
\begin{equation}
A H = P^\top\begin{bmatrix}
		 		A_{11} &  A_{12} \\  A_{12}^\top & A_{22}
      \end{bmatrix} P H
		= P^\top\begin{bmatrix}
				A_{11} &  A_{12} \\  A_{12}^\top & A_{22} 
			\end{bmatrix}  \begin{bmatrix}  I_m \\ 0_{n-m} \end{bmatrix}
		= P^\top \begin{bmatrix} A_{11} \\ A_{12}^\top \end{bmatrix}.
\end{equation}
We can thus write \cref{eq:replace} as
\begin{equation}
	\begin{bmatrix}
		A & AH & H\\
		H^\top A & C & 0\\
		H^\top & 0 & 0
	\end{bmatrix} \begin{bmatrix}
		\vb{\hat{x}}_1\\
		\vb{\hat{x}}_2\\
		\vb{\hat{x}}_3
	\end{bmatrix} = \begin{bmatrix}
		\vb{b}\\
		H^\top\vb{\hat{b}}\\
		\vb{0}
	\end{bmatrix}.\label{eq:augEqn}
\end{equation}

Here is an example in which the principal submatrix at the $3$rd and
$5$th rows and columns are modified. The augmented system
\cref{eq:augEqn} would be
{\setlength{\arraycolsep}{3.75pt}%
\begin{equation}
	\left[\begin{array}{*{6}c|*{4}c}
			\multicolumn{6}{c|}{\multirow{6}{*}{$A$}} & \alpha_{13} & \alpha_{15} & 0 & 0\\
			\multicolumn{6}{c|}{} & \alpha_{23} & \alpha_{25} & 0 & 0\\
			\multicolumn{6}{c|}{} & \underline{\alpha_{33}} & \underline{\alpha_{35}} & 1 & 0\\
			\multicolumn{6}{c|}{} & \alpha_{43} & \alpha_{45} & 0 & 0\\
			\multicolumn{6}{c|}{} & \underline{\alpha_{53}} & \underline{\alpha_{55}} & 0 & 1\\
			\multicolumn{6}{c|}{} & \vdots & \vdots & \vdots & \vdots\\
			\hline 
			\alpha_{31} & \alpha_{32} & \underline{\alpha_{33}} & \alpha_{34} & \underline{\alpha_{35}} & \cdots & \hat{\alpha}_{33} & \hat{\alpha}_{35} & 0 & 0\\
			\alpha_{51} & \alpha_{52} & \underline{\alpha_{53}} & \alpha_{54} & \underline{\alpha_{55}} & \cdots & \hat{\alpha}_{53} & \hat{\alpha}_{55} & 0 & 0\\
			0 & 0 & 1 & 0 & 0 & \cdots & 0 & 0 & 0 & 0\\
			0 & 0 & 0 & 0 & 1 & \cdots & 0 & 0 & 0 & 0\\
	\end{array}\right] \hspace{-0.03in} \begin{bmatrix}
		\hat{\chi}_1 \\ \hat{\chi}_2 \\ \zeta_3 \\ \hat{\chi}_4 \\ \zeta_5 \\ \vdots \\ \hat{\chi}_3 \\ \hat{\chi}_5 \\ \delta_3 \\ \delta_5
	\end{bmatrix} \hspace{-0.015in} = \hspace{-0.015in} \begin{bmatrix}
		\beta_1 \\ \beta_2 \\ \beta_3 \\ \beta_4 \\ \beta_5 \\ \vdots \\ \hat{\beta}_3 \\ \hat{\beta}_5 \\ 0 \\ 0
	\end{bmatrix},
	\label{eq:augEg}
\end{equation}
}%
in which the $\zeta$ terms are constrained to be 0, the $\chi$ terms are the
permuted solutions to \cref{eq:mod}, and the $\delta$ terms are the
values of $\vb{\hat{x}}_3$ in \cref{eq:augEqn}.

\section{Solution method}
\label{sec:method}
In this section, we describe our algorithms to solve the system \cref{eq:replace}.
Suppose we have computed the LDL$^\top$ factorization of $A$ when solving the
original system $A \vb{x} = \vb{b}$. Here, $L$ is a unit lower triangular matrix and
$D$ is a diagonal matrix or block diagonal matrix with $1 \times 1$ or $2 \times 2$
blocks if $A$ is indefinite. A fill-reducing ordering and an ordering to maintain
numerical stability are usually used during the factorization, and thus a permuted
matrix of $A$ is factored, i.e.,  $\dot{P}^\top A \dot{P} = L D L^\top$ for some
permutation matrix $\dot{P}$. We assume that hereafter the permuted system
$\dot{P}^\top A\dot{P}\dot{P}^\top\vb{x} =\dot{P}^\top\vb{b}$ has replaced the
original system. Solutions to the original system $A \vb{x} = \vb{b}$ can then be
obtained by applying the inverse permutation $\dot{P}$.
For simplicity, we will not explicitly write the permutation matrices $\dot{P}$. We can
solve \cref{eq:augEqn} in two ways.

\paragraph{Iterative method}
With $A = LDL^\top$ as a block pivot, \cref{eq:augEqn} can be reduced to a
smaller system involving the symmetric matrix $S_1$, the Schur complement of
$A$, with a multiplication by $-1$:
\begin{equation}
	\underbrace{\begin{bmatrix}
		E & I\\
		I & H^\top A^{-1}H
	\end{bmatrix}}_{S_1} \begin{bmatrix}
		\vb{\hat{x}}_2\\
		\vb{\hat{x}}_3
	\end{bmatrix} = \begin{bmatrix}
		H^\top(\vb{b}-\vb{\hat{b}}) \\ H^\top A^{-1}\vb{b}
	\end{bmatrix},\label{eq:S1}
\end{equation}
where $E = H^\top A H - C$, which is the same $E$ as in \cref{eq:update}.
This can be shown by premultiplying and postmultiplying \cref{eq:update}
by $H^\top$ and $H$ respectively:
\begin{equation}
	C \equiv H^\top \hat{A} H = H^\top A H - H^\top H E H^\top H = H^\top A H - E.
\end{equation}
We can solve \cref{eq:S1} by an iterative method such as GMRES or MINRES.
Matrix-vector products with $S_1$ need a partial solve with $A = LDL^\top$
involving only the rows and columns selected by $H$ and $H^\top$, and products
with $E$, in each iteration. Note that $H^\top$ in the right-hand-side vector
selects the components from the difference vector $\vb{\hat{b}} - \vb{b}$ and the
solution $\vb{x}$ of the original system $A\vb{x} = \vb{b}$ if the changes in the
right-hand-side vector are only in the changed rows of $A$.

\paragraph{Direct method}
Alternatively, the $(2,1)$-block of $S_1$ can be used as a block pivot with
another Schur complement:
\begin{equation}
	\big(\underbrace{E H^\top A^{-1} H - I}_{S_2}\big)\vb{\hat{x}}_3 =
		E H^\top A^{-1}\vb{b} - H^\top(\vb{b}-\vb{\hat{b}}).\label{eq:x3}
\end{equation}
We can then solve this equation for $\vb{\hat{x}}_3$ using a direct solver
with an LU factorization of $S_2$, which can be constructed efficiently
as described later in \cref{sec:formS2}. Note that $S_2$ is unsymmetric
although both the augmented matrix in \cref{eq:augEqn} and $S_1$ in
\cref{eq:S1} are symmetric. This is because we have chosen an off-diagonal
block pivot in forming $S_2$, to avoid the computation of the inverse of
either $E$ or $H^\top A^{-1}H$ on the diagonal block of $S_1$.

\subsection{Solution to the modified system}\label{sec:sol}
It turns out that we only need to compute $\vb{\hat{x}}_3$ to obtain the 
full solution vector $\vb{\hat{x}}$ to the modified system \cref{eq:mod}. 
This can be done by making the following observation.
Premultiplying the first row block of \cref{eq:augEqn} by $A^{-1}$, and 
rearranging terms yields
\begin{equation}
	\vb{\hat{x}}_1 = A^{-1}\vb{b} - H\vb{\hat{x}}_2 - A^{-1} H\vb{\hat{x}}_3.
	\label{eq:row1}
\end{equation}
From \cref{eq:updated-solution}, we have
\begin{align}
	P\vb{\hat{x}} &= \begin{bmatrix}
		\vb{\hat{x}}_{2} \\ \vb{\hat{x}}_{12}
	\end{bmatrix}
	= \begin{bmatrix}
		\vb{\hat{x}}_{11} \\ \vb{\hat{x}}_{12}
	\end{bmatrix} + \begin{bmatrix}
		\vb{\hat{x}}_2 \\ \vb{0}
	\end{bmatrix}
	= P\vb{\hat{x}}_1 + PH\vb{\hat{x}}_2,
\end{align}
by using \cref{eq:PH,eq:Px1}, and the fact that $\vb{\hat{x}}_{11} = \vb{0}$.
Premultiplying both sides by $P^\top$ yields
\begin{equation}
	\vb{\hat{x}} = \vb{\hat{x}}_1 + H\vb{\hat{x}}_2.\label{eq:xhat}
\end{equation}
Substituting \cref{eq:row1} into \cref{eq:xhat}, we have
\begin{equation}
	\vb{\hat{x}} = A^{-1}\vb{b} - A^{-1} H\vb{\hat{x}}_3,\label{eq:sol}
\end{equation}
in which the first term is the solution to the original system.

\subsection{Relation to the Sherman-Morrison-Woodbury formula}
The solution $\hat{\vb{x}}$ in \cref{eq:mod} obtained by AMPS using the
direct approach can be expressed in a single equation by substituting
$\vb{\hat{x}}_3$ in \cref{eq:x3} into \cref{eq:sol}:
\begin{equation}
	\hat{\vb{x}} = A^{-1}\vb{b}-A^{-1}H\left(E H^\top A^{-1}H-I\right)^{-1}
	\left[E H^\top A^{-1}\vb{b}-H^\top(\vb{b}-\vb{\hat{b}})\right].
	\label{eq:ampssol1}
\end{equation}
In the case when the right-hand-side of \cref{eq:mod} does not change from
the original system, i.e. $\vb{\hat{b}} = \vb{b}$, \cref{eq:ampssol1} becomes
\begin{equation}
	\hat{\vb{x}} = \left[A^{-1}-A^{-1}H\left(E H^\top A^{-1}H-I\right)^{-1}
	E H^\top A^{-1}\right]\vb{b}.\label{eq:ampssol2}
\end{equation}
Using the Sherman-Morrison Woodbury formula
\begin{equation}
	\left(A+UCV\right)^{-1}=A^{-1}-A^{-1}U\left(C^{-1}+VA^{-1}U\right)^{-1}VA^{-1},
\end{equation}
the inverse of $\hat{A}$ in \cref{eq:update} can be expressed as
\begin{align}
	\hat{A}^{-1} &= \left[A + \left(H\right)\left(-I\right)\left(EH^\top\right)\right]^{-1}\nonumber\\
							 &= A^{-1} - A^{-1}H\left(-I + EH^\top A^{-1}H\right)^{-1}EH^\top A^{-1},
\end{align}
which when multiplied by $\vb{b}$ is identical to \cref{eq:ampssol2}.

\subsection{Forming the Schur complement \texorpdfstring{$S_2$}{S2} explicitly}
\label{sec:formS2}
In the previous subsection we have described the algorithm to solve
the modified system using the augmented formulation.
We now discuss how we form the matrix $W \equiv E H^\top A^{-1} H$ in
the Schur complement $S_2$ in \cref{eq:x3} by using partial triangular solves. 

From the factorization of $A$, the matrix $H^\top A^{-1} H$ can be expressed as
the product $H^\top L^{-\top} D^{-1} L^{-1} H$. Then $W^\top$ can be expressed as
\begin{equation}
	W^\top = H^\top L^{-\top}  D^{-1} L^{-1} H E.\label{eq:wtop}
\end{equation}
Recall that $E$ is symmetric. Let $X \equiv L^{-1} H E$. 
Premultiplying both sides of this equation by $L$, we have
\begin{equation}
	L X = H E \equiv \tilde{E}.\label{eq:lg}
\end{equation}
Observe that the right-hand-side of \cref{eq:lg} is a matrix $\tilde{E}$
mapping the $i^\text{th}$ row of $E$ to the $j_{i}^\text{th}$ row of
$\tilde{E}$ with the rest of $\tilde{E}$ filled with 0. For instance, if
the set of indices of updates is $\mathbb{S} = \{3, 5\}$, then \cref{eq:lg}
would be
\begin{equation}
	LX = HE = \begin{bmatrix}
		0 & 0\\
		0 & 0\\
		1 & 0\\
		0 & 0\\
		0 & 1\\
		\vdots & \vdots
	\end{bmatrix} E = \begin{bmatrix}
		0 & 0\\
		0 & 0\\\
		\epsilon_{11} & \epsilon_{12}\\
		0 & 0\\
		\epsilon_{21} & \epsilon_{22}\\
		\vdots & \vdots
	\end{bmatrix}.
\end{equation}
Since both $L$ and $\tilde{E}$ are sparse, we can use partial forward
substitution to solve for $X$.

Let $Y \equiv L^{-1} H$. We can again use partial forward substitution by
exploiting the sparsity in $H$ and $L$ to compute $Y$, as discussed in
the next subsection. Once we have the matrices $X$ and $Y$, we can compute
$W^\top$ as follows:
\begin{equation}
	W^\top = H^\top L^{-\top} D^{-1} X = Y^{\top} D^{-1} X.
\end{equation}

\subsubsection{Exploiting sparsity in the computations}
To describe how we exploit the sparsity in the matrices and the vectors 
to reduce the complexity of our algorithms,
we need a few concepts from sparse matrix theory as outlined below. 
We start with a few definitions to help with the discussions that follow.

\begin{definition}
	An $n \times n$ sparse matrix $A$ can be represented by a directed graph
	$G(A)$ whose vertices are the integers $1,\ldots,n$ and whose edges are
	\[
		\{(i,j)\vcentcolon i\ne j,\text{ and } \alpha_{ij}\ne 0\}.
	\]
	The edge $(i ,j)$ is directed from vertex $i$ to $j$.
	The set of edges is also called the \textit{(nonzero) structure} of $A$.
\end{definition}

The transitive reduction of a directed graph $G= (V, E)$ is obtained 
by deleting from the set of edges $E$ every edge  $(i,j)$ 
such that  there is a directed path from vertex $i$ to $j$ 
that does not use the edge $(i,j)$ itself.
  
\begin{definition}
	An elimination tree of a Cholesky factor $L$ is the transitive reduction of
	the directed graph $G(L)$.~\cite{liu}
\end{definition}

\begin{definition}
	The \textit{(nonzero)} structure of an $n$-vector $\vb{x}$ is
	\[
		\struct{x}\vcentcolon =\{i\vcentcolon \chi_i\ne 0\},
	\]
	which can be interpreted as a set of vertices of the directed graph of any
	$n \times n$ matrix. In this paper, for a vector $\vb{x}$,
	$\closure{A}{\vb{x}}$ refers to $\closure{A}{\struct{\vb{x}}}$.
\end{definition}

\begin{definition}
	Given a directed graph $G(A)$ and a subset of its vertices denoted by $V$, we
	say $V$ is closed with respect to $A$ if there is no edge of $G(A)$ that joins
	a vertex not in $V$ to a vertex in $V$; that is, $\nu_j\in V$ and
	$\alpha_{ij}\ne 0$ implies $\nu_i\in V$. The \textit{closure} of $V$ with
	respect to $A$ is the smallest closed set containing $V$,
	\[
		\closure{A}{V}\vcentcolon =
			\bigcap\{U\vcentcolon V\subseteq U,\text{ and } U\text{ is closed}\},
	\]
	which is the set of vertices of $G(A)$ from which there are directed paths in
	$G(A)$ to vertices in $V$.
\end{definition}

To compute the structure of $X$ in \cref{eq:lg}, we apply the following
theorem.
\begin{theorem}
	\label{thm1}
	Let the structures of $A$ and $\vb{b}$ be given. Whatever the values of the
	nonzeros in $A$ and $\vb{b}$, if $A$ is nonsingular then
	\[
		\struct{A^{-1}\vb{b}}\subseteq\closure{A}{\vb{b}}.
	\]
\end{theorem}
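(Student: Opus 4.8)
The plan is to prove the contrapositive at the level of index sets. Writing $C := \closure{A}{\vb{b}}$, I will show that every index $i \notin C$ satisfies $\chi_i = 0$, which is precisely $\struct{A^{-1}\vb{b}} \subseteq C$. The entire argument rests on using the \emph{closedness} of $C$ to expose a block-triangular structure in $A$ after a symmetric permutation, after which the conclusion is forced.

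First I would record two structural consequences of the definitions. Since $\struct{\vb{b}} \subseteq \closure{A}{\vb{b}} = C$, the subvector of $\vb{b}$ on the complement $\bar C := \{1,\dots,n\} \setminus C$ is identically zero. Next I would invoke closedness of $C$: if $i \in \bar C$ and $\alpha_{ij} \neq 0$ for some $j \in C$, then the defining implication ($j \in C$ together with $\alpha_{ij}\ne 0$) forces $i \in C$, a contradiction; hence $\alpha_{ij} = 0$ for every $i \in \bar C$ and $j \in C$. In other words, the submatrix $A[\bar C, C]$ vanishes.

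Then I would apply the symmetric permutation that lists the indices of $C$ first and those of $\bar C$ second. Under this permutation $A$ becomes block upper triangular,
\[
	\begin{bmatrix} A_{CC} & A_{C\bar C} \\ 0 & A_{\bar C\bar C} \end{bmatrix},
\]
while $\vb{b}$ becomes $\begin{bmatrix} \vb{b}_C \\ \vb{0} \end{bmatrix}$. Reading off the second block row of $A\vb{x} = \vb{b}$ gives $A_{\bar C\bar C}\,\vb{x}_{\bar C} = \vb{0}$. Because $A$ is nonsingular and block triangular, $\det A = \det A_{CC}\,\det A_{\bar C\bar C} \neq 0$, so $A_{\bar C\bar C}$ is invertible and therefore $\vb{x}_{\bar C} = \vb{0}$, which is the claim.

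The step I expect to carry the real content is translating the closure condition into the vanishing of $A[\bar C, C]$ while keeping the edge orientation straight: this one-line use of closedness is what produces the zero block and hence decouples the $\bar C$ equations. I would also emphasize that the argument uses only the \emph{structure} of $A$ to obtain block triangularity, together with nonsingularity of $A$ to invert the diagonal block $A_{\bar C\bar C}$; consequently the conclusion holds for arbitrary nonzero values, exactly as the theorem asserts. No appeal to genericity or to the absence of numerical cancellation is needed, because the statement only provides an upper bound on $\struct{A^{-1}\vb{b}}$.
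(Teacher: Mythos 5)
Your proof is correct. The paper does not actually prove \cref{thm1}; it only cites Gilbert, and your block-triangular argument (closedness of $\closure{A}{\vb{b}}$ forces $A[\bar C, C]=0$, $\vb{b}_{\bar C}=\vb{0}$, and nonsingularity of $A$ forces the diagonal block $A_{\bar C\bar C}$ to be invertible, hence $\vb{x}_{\bar C}=\vb{0}$) is essentially the standard proof given in that reference, with the edge orientation handled consistently with the paper's definition of $G(A)$ and of a closed set.
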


The proof of \cref{thm1} is due to Gilbert~\cite{gilbert}.
Hence the structure of each column of $X$ would be the closure of the 
nonzeros of the corresponding column of $\tilde{E}$ in the graph of $G(L)$.
Similarly, to compute the submatrix of $L^\top$ necessary to obtain the 
needed components of $\tilde{W}$, we can apply the following theorem.
\begin{theorem}
	\label{thm2}
	Suppose we need only some of the components of the solution vector $\vb{x}$
	of the system $A\vb{x} = \vb{b}$. Denote the needed components by
	$\vb{\tilde{x}}$. If $A$ is nonsingular, then the set of components in
	$\vb{b}$ needed is $\closure{A^\top}{\vb{\tilde{x}}}$.
\end{theorem}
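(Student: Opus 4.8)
The plan is to obtain \cref{thm2} as the transpose dual of \cref{thm1}. The guiding observation is that extracting a single component of the solution is a linear functional applied to the right-hand side: since $\vb{x} = A^{-1}\vb{b}$, the $i$th component is $\chi_i = \vb{e}_i^\top A^{-1}\vb{b} = (A^{-\top}\vb{e}_i)^\top \vb{b}$. Writing $\vb{y}_i \vcentcolon= A^{-\top}\vb{e}_i$, the component $\chi_i$ can depend on $\beta_j$ only when the $j$th entry of $\vb{y}_i$ is (structurally) nonzero, so the set of entries of $\vb{b}$ that can influence $\chi_i$ is exactly $\struct{A^{-\top}\vb{e}_i}$. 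This reduces the question about ``which entries of $\vb{b}$ are needed'' to a question about the structure of a triangular-type solve with $A^\top$.

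First I would apply \cref{thm1} with $A^\top$ in the role of $A$ and $\vb{e}_i$ in the role of $\vb{b}$. Since $A$ is nonsingular, so is $A^\top$, and \cref{thm1} yields $\struct{A^{-\top}\vb{e}_i}\subseteq \closure{A^\top}{\vb{e}_i} = \closure{A^\top}{\{i\}}$, using that $\struct{\vb{e}_i}=\{i\}$. Hence, whatever the numerical values in $A$, the entries of $\vb{b}$ that can affect $\chi_i$ lie in $\closure{A^\top}{\{i\}}$. Next I would take the union over all needed components. Let $T \vcentcolon= \struct{\vb{\tilde{x}}}$ index the components we wish to compute; the entries of $\vb{b}$ that can affect some $\chi_i$ with $i\in T$ are contained in $\bigcup_{i\in T}\closure{A^\top}{\{i\}}$. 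Because the closure is a reachability operation — $\closure{A^\top}{V}$ is the set of vertices of $G(A^\top)$ from which a vertex of $V$ is reachable by a directed path — it distributes over unions, so $\bigcup_{i\in T}\closure{A^\top}{\{i\}} = \closure{A^\top}{T} = \closure{A^\top}{\vb{\tilde{x}}}$, the last step being the paper's convention that $\closure{A^\top}{\vb{\tilde{x}}}$ abbreviates $\closure{A^\top}{\struct{\vb{\tilde{x}}}}$.

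The main subtlety I expect is reconciling the equality asserted in the statement (``the set $\ldots$ needed \emph{is} $\ldots$'') with the containment that \cref{thm1} supplies. I would resolve this by reading ``needed'' in the structural sense standard in sparse direct methods: a component $\beta_j$ is needed if there exist nonzero values of the entries of $A$ and $\vb{b}$, consistent with the prescribed structures, for which $\beta_j$ influences some $\chi_i$ with $i\in T$. Under this reading the closure bound is tight, since a directed path in $G(A^\top)$ witnessing membership in $\closure{A^\top}{\vb{\tilde{x}}}$ is precisely a potential fill path in $A^{-\top}$, so the inclusion from \cref{thm1} becomes a description of exactly the components that may be required. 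The one algebraic fact the union step rests on is the distributivity of closure over unions, which is immediate from its reachability characterization but which I would state explicitly to justify passing from the per-component bounds to the single set $\closure{A^\top}{\vb{\tilde{x}}}$.
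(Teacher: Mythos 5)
Your proof is correct. The paper itself gives no argument for \cref{thm2} --- it defers entirely to the earlier reference --- so there is no in-paper proof to match against; judged on its own terms, your derivation works. The reduction is clean: writing each needed component as a linear functional $\chi_i = (A^{-\top}\vb{e}_i)^\top\vb{b}$ turns the question into a structure prediction for the transposed system; \cref{thm1} applied to $A^\top\vb{y}_i = \vb{e}_i$ gives $\struct{A^{-\top}\vb{e}_i}\subseteq\closure{A^\top}{\{i\}}$; and the union step is justified because closure, being a reachability operator, distributes over unions of target sets (equivalently, a union of closed sets is closed). This makes \cref{thm2} an immediate transpose dual of \cref{thm1} rather than a separate result, which is a nice economy. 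The one soft spot is the one you identify yourself: \cref{thm1} supplies only a containment, whereas the statement asserts equality. Your structural reading of ``needed'' (the bound is attained for generic values of the nonzeros) is the standard resolution, and for the paper's purposes only the containment direction matters --- it is what certifies that restricting the partial triangular solves to the closure loses nothing, and hence what underlies the $O(m\cdot\rho)$ cost bounds.
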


The proof of \cref{thm2} can be found in \cite{yeung}.
Hence we can deduce that if $\mathbb{S}$ is the set of indices of updates,
the submatrix of $L^\top$ needed would also be the closure of $\mathbb{S}$,
which is the same as the row indices of nonzeros in the columns of $X$.
(Recall that $\mathbb{S}$ has cardinality $m$.)
Since $L$ is a Cholesky factor, this closure is equivalent to the union of
all the vertices on the paths from $\mathbb{S}$ to the root of the
elimination tree of $G(L)$, denoted as $P_{\mathbb{S}}$, as proven among
others in \cite{yeung}. We denote the size of this closure by $\rho$:
\begin{equation}
	\rho \equiv |\closure{L}{\mathbb{S}}| = \sum_{k\in P_{\mathbb{S}}}{|L_{*k}|}.
	\label{eq:rho}
\end{equation}
The upper bound on $\rho$ is the total number of nonzeros in $L$, denoted as
$|L|$. In practice since $m \ll n$, this upper bound is quite loose, and $\rho$ is
closer to a small constant times $m$ than to $|L|$.

\subsection{Complexity analysis}
The time complexity of principal submatrix updates using the symmetric
augmented formulation can be summarized in \cref{tab:summary}. Direct method
refers to the approach of solving for $\vb{\hat{x}}_3$ directly using
\cref{eq:x3}, and iterative method refers to the approach of applying an
iterative method to \cref{eq:S1}. Recall that $n$ is the size of the original
matrix $A$, $m$ is the size of the principal submatrix update $C$, while
$t$ denotes the number of iterations that the iterative method takes to
converge.
\begin{table*}[th]
	\small
	\setlength{\extrarowheight}{0.07in}
	\begin{tabularx}{\textwidth}{llp{1.7in}>{\centering\arraybackslash}X|>{\centering\arraybackslash}X}
		\toprule
		& & \multirow{2}{*}{Computation} & \multicolumn{2}{>{\centering\setlength\hsize{2\hsize}}X}{Complexity} \\
	 & & & Direct method & Iterative method \\ \hline
		\multicolumn{4}{l}{ Amortized initialization:}\\
		1 & \multicolumn{2}{l}{Compute LDL$^\top$ factorization of $A$} & \multicolumn{2}{c}{$O(n^{3/2})$ for planar networks}\\
		2 & \multicolumn{2}{l}{Compute $\vb{x} = A^{-1}\vb{b}$} & \multicolumn{2}{c}{$O(|L|)$}\\\hline
	\multicolumn{4}{l}{ Real-time update steps:} \\
	1 & \multicolumn{2}{l}{Obtain the submatrix $B$} & \multicolumn{2}{c}{$O(|B|) \leq O(m^2)$}\\
	2 & \multicolumn{2}{l}{Compute $E = H^\top A H - B$} & \multicolumn{2}{c}{$O(|E|) \leq O(m^2)$}\\
	3 & \multicolumn{2}{l}{Compute $W^\top= H^\top A^{-1} H E$} & $O(m\cdot\rho)$ & -\\
		& (a) & Form $\tilde{E} = HE$ & $O(|E|)$ & -\\
	 & (b) & Solve $L X = \tilde{E}$ & $O(m\cdot\rho)$ & -\\
	 & (c) & Solve $L^\top\tilde{W}^\top = D^{-1}X$ & $O(m\cdot\rho)$ & -\\
	 & (d) & Form $W^\top = H^\top \tilde{W}^\top$ & $O(m^2)$ & -\\
	4 & \multicolumn{2}{l}{Form $ W - I$} & $O(m)$ & -\\
	5 & \multicolumn{2}{l}{Form R.H.S. of \cref{eq:S1,eq:x3}} & $O(m)$ & $O(|E|+m)$\\
	6 & \multicolumn{2}{l}{Solve for $\vb{\hat{x}}_3$} & $O(m^3)$ & $O(t \cdot\rho)$\\
	7 & \multicolumn{2}{l}{Solve $\vb{\hat{x}} = \vb{x} - A^{-1} H \vb{\hat{x}}_3$} & \multicolumn{2}{c}{$O(|L|)$}\\
	\bottomrule
\end{tabularx}
\label{tab:summary}
\caption{Summary of time complexity}
\end{table*}

The overall time complexity of the direct method is dominated by either Step~3
(computing $W^\top$) or Step~7 (solving for $\vb{\hat{x}}$), i.e.,
$O(m\cdot\rho+|L|)$. For the iterative method, the  time complexity is dominated
by either Step~6 (solving for $\vb{\hat{x}}_3$) or Step~7 
(solving for $\vb{\hat{x}}$), i.e., $O(t\cdot\rho+|L|)$.
Hence the AMPS algorithms have the time complexities
\begin{equation}
	O(m\cdot\rho+|L|) \ (\text{direct})\quad\text{and}\quad
	O(t\cdot\rho+|L|) \  (\text{iterative}).
\end{equation}

In comparison, for CHOLMOD~\cite{davis}, the time complexity for updating the
Cholesky factor of the matrix, when row and column changes are made, is  
\begin{equation}
	O\left(\sum_{j\in\mathbb{S}}\left(\sum_{k\vcentcolon\overline{L}_{jk}\ne 0}|L_{*k}|
	+ \sum_{k\in\overline{P}_j}|\overline{L}_{*k}|\right)\right),\label{eq:cholmod}
\end{equation}
where $L$ is the original Cholesky factor, $\overline{L}$ is the modified
Cholesky factor and $\overline{P}_j$ is the path from node~$j$ to the root of
the elimination tree of $\overline{L}$. (Note that 
we have to add the cost $|\overline{L}|$ to compute the 
solution by solving the triangular system of equations.)

Consider the two inner sums in the expression for the complexity. The first
inner sum computes the total number of operations of Steps 1--4 in both
Algorithms~1 (Row Addition) and 2 (Row Deletion) in CHOLMOD. If we denote
$\mathbb{T}_j$ as the set of nodes~$k < j$ in $G(\hat{A})$ that have an edge
incident on node~$j$, then this sum is equivalent to the number of outgoing
edges of the closure of $\mathbb{T}_j$ in $G(\overline{L})$ up to node~$j$.
The second inner sum computes the number of operations needed for Step 5
(rank-1 update/downdate) in Algorithms~1 and 2 of CHOLMOD. This sum is
equivalent to the closure of~$\{j\}$ in the updated graph $G(\overline{L})$.
Combining the two summation terms, we can express the time complexity of
CHOLMOD in terms of the closures:
\begin{equation}
	O\left(\sum_{j\in\mathbb{S}}\closure{\overline{L}}{\mathbb{T}_j}\right)
	\le O\left(m\cdot\max_j\closure{\overline{L}}{\mathbb{T}_j}\right).
\end{equation}

We make two observations when comparing the AMPS algorithms with
\linebreak CHOLMOD. First, in general, the AMPS algorithms do not introduce
new fill-in elements in the Cholesky factor whereas fill-ins are possibly
introduced in CHOLMOD. However, this happens when the update introduces a
new nonzero entry in \linebreak row/column~$j$ of $\hat{A}$. In our
application to the contingency analysis for power flow, we only remove
connections between buses. Hence running CHOLMOD neither introduces fill-ins
to the factor nor changes the elimination tree. Second, since the nodes in
$\mathbb{T}_j$ are numbered less than $j$, the closure of $\mathbb{T}_j$ is
always larger than the closure of~$\{j\}$, whether or not the updated factor
$\overline{L}$ is different from $L$. In the case that row $j$ of $\overline{L}$
is relatively dense due to fill-in, the first inner sum in \cref{eq:cholmod} may be
the dominant term. On the other hand, the AMPS algorithms only need the closure 
from node~$j$ in $G(L)$.

\subsection{Comparison with other augmented methods}
Several algorithms have been proposed to solve a modified system of linear
equations using augmented matrices. Gill et al.~\cite{gill} used augmented
matrices and a factorization approach to update basis matrices in the
simplex algorithm for linear programming, motivated by the work of Bisschop and
Meeraus~\cite{bisschop1977,bisschop1980}. In their method, the matrix
 was
factored in a block-LU form as
\begin{equation}
	\begin{bmatrix}
		A & \hat{A}H\\
		H^\top & 
	\end{bmatrix} = \begin{bmatrix}
		L &\\
		\tilde{Z}^\top & \tilde{D}
	\end{bmatrix} \begin{bmatrix}
		U & \tilde{Y}\\
			& I
	\end{bmatrix}.\label{eq:gill}
\end{equation}
Here the matrix $L$ is unit-lower triangular and the matrix $U$ is upper triangular.
The matrices $\tilde{Y}$ and $\tilde{Z}$ are $n \times m$ submatrices of the block
factors and $\tilde{D}$ is the Schur complement of $A$.

Maes~\cite{maes} and Wong~\cite{wong} used a similar approach to implement
active-set QP solvers with symmetric augmentation to solve the Karush-Kuhn-Tucker
(KKT) matrices arising from Hessian updates and factored in a block-LU form as
\begin{equation}
	\begin{bmatrix}
		A & V\\
		V^\top & \tilde{C}
	\end{bmatrix} = \begin{bmatrix}
		L &\\
		Z^\top & I
	\end{bmatrix} \begin{bmatrix}
		U & Y\\
			& \tilde{S}
	\end{bmatrix}.\label{eq:kktLU}
\end{equation}
Here the submatrices $Y$ and $Z$ are $n \times 2m$ submatrices, doubling
the size of $\tilde{Y}$ and $\tilde{Z}$ in \cref{eq:gill}. These submatrices
were updated using sparse triangular solves and $\tilde{S}$ was updated
using a dense LU-type factorization. Comparing the augmented matrix in
\cref{eq:kktLU} with \cref{eq:replace}, we have
\begin{equation}
	V = \begin{bmatrix}
		AH & H
	\end{bmatrix}Q^\top\quad\text{and}\quad \tilde{C} = Q\begin{bmatrix}
		C & 0\\
		0 & 0
	\end{bmatrix}Q^\top,\label{eq:VC}
\end{equation}
for some permutation matrix $Q$.

To take advantage of symmetry, Maes and Wong factored the augmented matrix in
a symmetric block-LBL${}^\top$ form
\begin{equation}
	\begin{bmatrix}
		A & V\\
		V^\top & \tilde{C}
	\end{bmatrix} = \begin{bmatrix}
		L &\\
		Z^\top & I
	\end{bmatrix} \begin{bmatrix}
		D & \\
			&\tilde{D}
	\end{bmatrix} \begin{bmatrix}
		L^\top & Z\\
					 & I
	\end{bmatrix}.\label{eq:sc_factors}
\end{equation}
The major differences between the our methods and the KKT matrix
block-LU/block-LBL$^\top$ update method are as follows.
We exploited the explicit forms of the submatrices $Z$ and $\tilde{D}$
in the factorization when the update is a principal submatrix.
Specifically, if we factor the augmented matrix in \cref{eq:augEqn} as in
\cref{eq:sc_factors}, we have
\begin{equation}
	\begin{bmatrix}
		A & AH & H\\
		H^\top A & C & 0\\
		H^\top & 0 & 0
	\end{bmatrix} = \hat{L}\hat{D}\hat{L}^\top,\label{eq:augFactors}
\end{equation}
where
\begin{equation}
	\hat{L} =	\begin{bmatrix}
		L\\
		H^\top L & I\\
		H^\top L^{-\top} D^{-1} & 0 & I
	\end{bmatrix}\quad\text{and}\quad\hat{D} = \begin{bmatrix}
		D\\
		& -S_1
	\end{bmatrix}. \label{eq:factorHat}
\end{equation}
Here $\hat{L}$ is a lower triangular matrix, $\hat{D}$ is a matrix whose
$(1,1)$-block is (block) diagonal and the rest is the negation of the Schur
complement $S_1$ in \cref{eq:S1}. Combining the results from \cref{eq:VC}
and \cref{eq:factorHat}, we obtain the relationship between the factors in
\cref{eq:sc_factors} and those in our method:
\begin{equation}
	Z^\top = Q \begin{bmatrix}
		H^\top L\\
		H^\top L^{-\top} D^{-1}
	\end{bmatrix}\quad\text{and}\quad
	\tilde{D} = Q(-S_1)Q^\top.
\end{equation}
Hence, we do not need to construct $Z$ and $\tilde{D}$ as in the Gill et al.,
Maes, and Wong algorithms. We also make use of the structure of the factors in
computing the solution, whereas Maes updated the factors by treating the
augmentation submatrix $V$ as sparse and $\tilde{C}$ as dense. Finally, we
compute the solution to the modified system by explicitly using the solution to
the original system.

\section{Experimental results}
\label{sec:results}
The augmented matrix solution method was evaluated through a series of 
$N-k$ contingency analyses of two real-world power systems, 
the 3,120-bus Polish system from the  MATPOWER repository~\cite{zimmerman}
and the 14,090-bus WECC system; and a 777,646-bus generated system,
which is based on the \textit{case2736sp} system from the 
MATPOWER repository and the IEEE 123 bus distribution feeder~\cite{kersting}. 
The distribution feeder is balanced by equivalencing the load on 
each phase and extending the unbalanced laterals. 
Several distribution feeders are added at appropriate locations in 
the transmission case to create this system.

This section provides relevant implementation details and presents
experimental results, including comparisons with the PARDISO direct
solver~\cite{kuzmin, schenk2008, schenk2007} on the modified systems,
and the CHOLMOD direct solver that updates the factors of $A$ according to
$\hat{A}$.

Since the power flow systems follow Kirchhoff's current law, 
the admittance matrix $B$ in \cref{eq:DC} is a weighted Laplacian. 
A boundary condition is applied to fix the phase shift of a selected bus
called the slack bus, and the reduced system is nonsingular 
but with an eigenvalue close to zero. Hence in the power community a direct
solver is usually used to solve the system.

The estimated condition numbers of the admittance matrices $B$
calculated by using MATLAB's \texttt{condest} function are 
$1.2 \times 10^6$ for the 3,120-bus Polish system,
$2.1 \times 10^7$ for the 14,090-bus WECC system,  
and $9.9 \times 10^8$ for the 777,646-bus generated system. 
The estimated eigenvalues with the smallest magnitude calculated
by using MATLAB's \texttt{eigs} function are 
$5.0 \times 10^{-2}$ for the Polish system, 
$2.3 \times 10^{-3}$ for the WECC system,  
and $1.4 \times 10^{-4}$ for the generated system.

\begin{figure}[htbp]
  \centering
	\subfloat[$3,120$-bus Polish system]{
		\begin{tikzpicture}
			\begin{pgfinterruptboundingbox}
			\begin{semilogyaxis}[
					xlabel={number of edges removed},
					ylabel={time (s)},
					ymajorgrids=true,
					thick,
					legend cell align=left,
					legend style={font=\small,at={(1.01,0.5)},anchor=west,draw=none},
					legend entries={Aug. Direct,Aug. GMRES,CHOLMOD,PARDISO}
				]
				\addplot table [x=remove,y=direct_t] {case3120sp.dat};
				\addplot table [x=remove,y=gmres_t] {case3120sp.dat};
				\addplot table [x=remove,y=cholmod_t] {case3120sp.dat};
				\addplot table [x=remove,y=pardiso_t] {case3120sp.dat};
			\end{semilogyaxis}
			\end{pgfinterruptboundingbox}
			\useasboundingbox
				(current axis.below south west)
				rectangle (current axis.above north east) + (1,0);
		\end{tikzpicture}\label{fig:time_a}
	}\\
	\subfloat[$14,090$-bus WECC system]{
		\begin{tikzpicture}
			\begin{pgfinterruptboundingbox}
			\begin{semilogyaxis}[
					xlabel={number of edges removed},
					ylabel={time (s)},
					ymajorgrids=true,
					thick,
					legend cell align=left,
					legend style={font=\small,at={(1.01,0.5)},anchor=west,draw=none},
					legend entries={Aug. Direct,Aug. GMRES,CHOLMOD,PARDISO}
				]
				\addplot table [x=remove,y=direct_t] {casepti08.dat};
				\addplot table [x=remove,y=gmres_t] {casepti08.dat};
				\addplot table [x=remove,y=cholmod_t] {casepti08.dat};
				\addplot table [x=remove,y=pardiso_t] {casepti08.dat};
			\end{semilogyaxis}
			\end{pgfinterruptboundingbox}
			\useasboundingbox
				(current axis.below south west)
				rectangle (current axis.above north east) + (1,0);
		\end{tikzpicture}\label{fig:time_b}
	}\\
	\subfloat[$777,646$-bus generated system]{
		\begin{tikzpicture}
			\begin{pgfinterruptboundingbox}
			\begin{semilogyaxis}[
					xlabel={number of edges removed},
					ylabel={time (s)},
					ymajorgrids=true,
					ytickten={-1.2,-1.4,-1.6,-1.8},
					thick,
					legend cell align=left,
					legend style={font=\small,at={(1.01,0.5)},anchor=west,draw=none},
					legend entries={Aug. Direct,Aug. GMRES,CHOLMOD}
				]
				\addplot table [x=remove,y=direct_t] {case777646.dat}; 
				\addplot table [x=remove,y=gmres_t] {case777646.dat};
				\addplot table [x=remove,y=cholmod_t] {case777646.dat};
			\end{semilogyaxis}
			\end{pgfinterruptboundingbox}
			\useasboundingbox
				(current axis.below south west)
				rectangle (current axis.above north east) + (1,0);
		\end{tikzpicture}\label{fig:time_c}
	}
  \caption{Timing results of compared methods}
  \label{fig:time}
\end{figure}
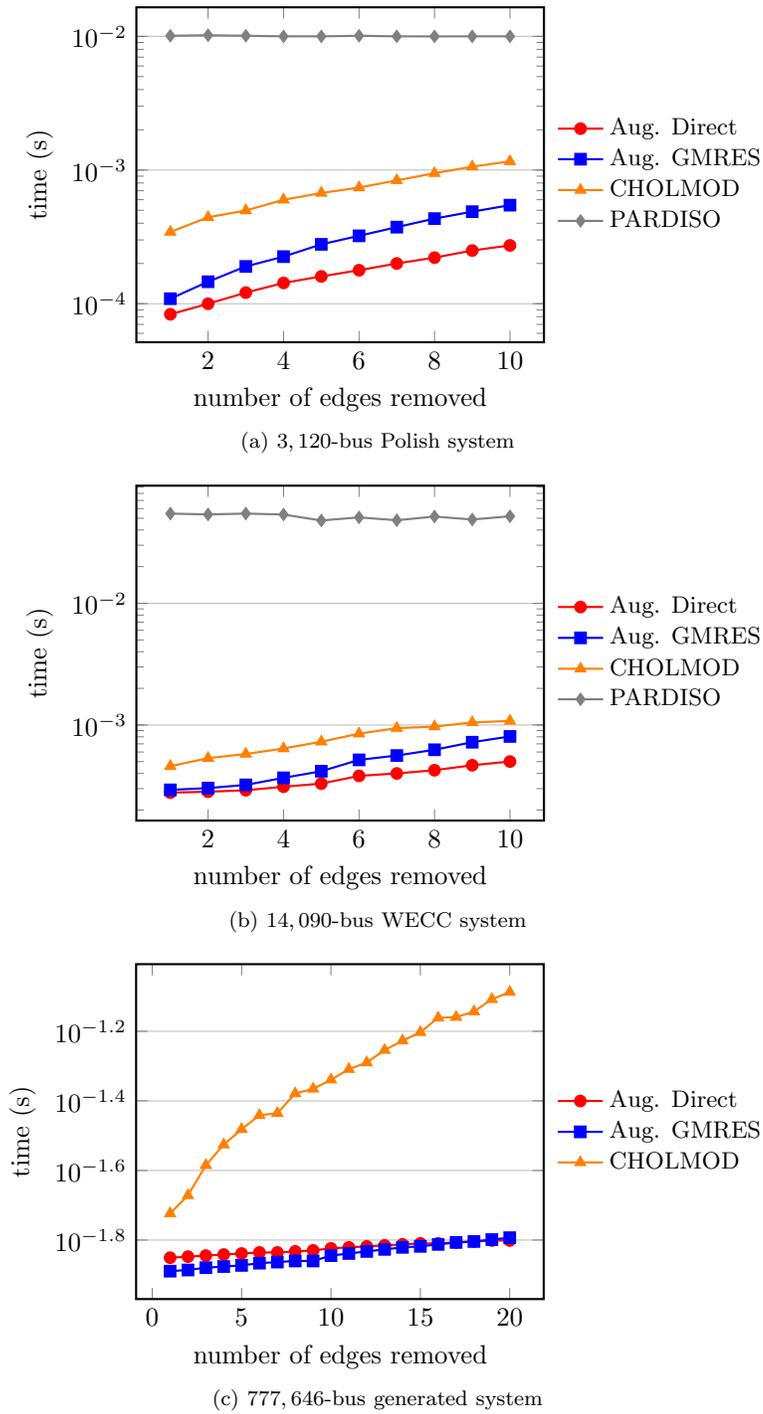
\begin{figure}[htbp]
	\centering
	\begin{tikzpicture}
		\begin{pgfinterruptboundingbox}
			\begin{axis}[stack plots=y,
					area cycle list,
					area legend,
					ymajorgrids=true,
					ymax=9e-2,
					enlarge y limits=false,
					ylabel={time (s)},
					xlabel={number of edges removed},
					thick,
					legend cell align=left,
					legend style={font=\small,at={(1.01,0.5)},anchor=west,draw=none},
					legend entries={Factor Update,Solution}
				]
				\pgfplotsset{cycle list shift=1}
				\addplot table [x=remove,y=cholmod_fac] {case777646.dat} \closedcycle;
				\pgfplotsset{cycle list shift=2}
				\addplot table [x=remove,y=cholmod_sol] {case777646.dat} \closedcycle;
			\end{axis}
		\end{pgfinterruptboundingbox}
		\useasboundingbox
		(current axis.below south west)
		rectangle (current axis.above north east) + (1,0);
	\end{tikzpicture}
  \caption{Breakdown of the time of the CHOLMOD method for the $777,646$-bus generated system}
	\label{fig:cholmod_bd}
\end{figure}
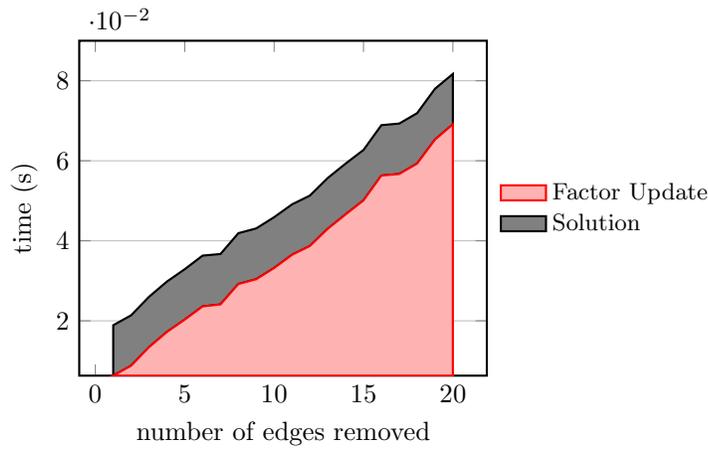
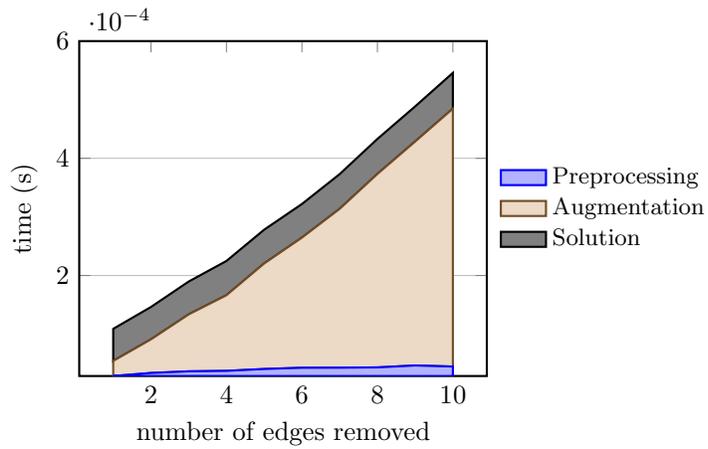
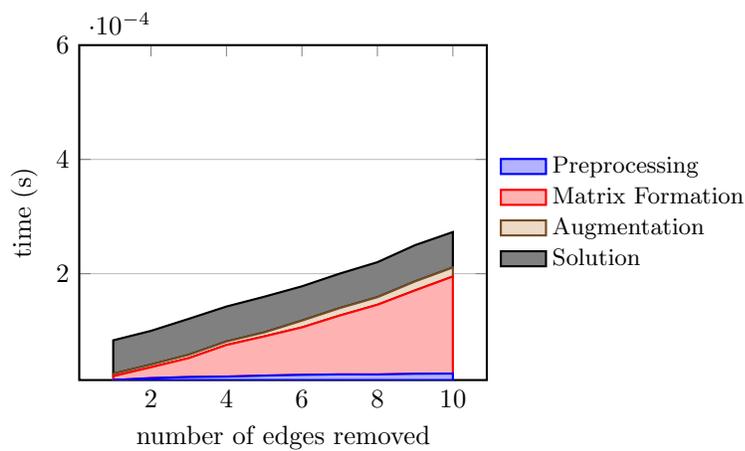
\begin{figure}[htbp]
  \centering
	\subfloat[Iterative method]{
		\begin{tikzpicture}
			\begin{pgfinterruptboundingbox}
			\begin{axis}[stack plots=y,
					area cycle list,
					area legend,
					ymajorgrids=true,
					ymax=6e-4,
					enlarge y limits=false,
					ylabel={time (s)},
					xlabel={number of edges removed},
					thick,
					legend cell align=left,
					legend style={font=\small,at={(1.01,0.5)},anchor=west,draw=none},
					legend entries={Preprocessing,Augmentation,Solution}
				]
				\addplot table [x=remove,y=gmres_pp] {case3120sp.dat} \closedcycle;
				\pgfplotsset{cycle list shift=1}
				\addplot table [x=remove,y=gmres_aug] {case3120sp.dat} \closedcycle;
				\addplot table [x=remove,y=gmres_sol] {case3120sp.dat} \closedcycle;
			\end{axis}
			\end{pgfinterruptboundingbox}
			\useasboundingbox
				(current axis.below south west)
				rectangle (current axis.above north east) + (1,0);
		\end{tikzpicture}\label{fig:3120bd_a}
	}\\
	\subfloat[Direct method]{
		\begin{tikzpicture}
			\begin{pgfinterruptboundingbox}
			\begin{axis}[stack plots=y,
					area cycle list,
					area legend,
					ymajorgrids=true,
					ymax=6e-4,
					enlarge y limits=false,
					ylabel={time (s)},
					xlabel={number of edges removed},
					thick,
					legend cell align=left,
					legend style={font=\small,at={(1.01,0.5)},anchor=west,draw=none},
					legend entries={Preprocessing,Matrix Formation,Augmentation,Solution}
				]
				\addplot table [x=remove,y=direct_pp] {case3120sp.dat} \closedcycle;
				\addplot table [x=remove,y=direct_mat] {case3120sp.dat} \closedcycle;
				\addplot table [x=remove,y=direct_aug] {case3120sp.dat} \closedcycle;
				\addplot table [x=remove,y=direct_sol] {case3120sp.dat} \closedcycle;
			\end{axis}
			\end{pgfinterruptboundingbox}
			\useasboundingbox
				(current axis.below south west)
				rectangle (current axis.above north east) + (1,0);
		\end{tikzpicture}\label{fig:3120bd_b}
	}
	\caption{Breakdown of the time for the $3,120$-bus Polish system}
  \label{fig:3120bd}
\end{figure}
\begin{figure}[htbp]
  \centering
	\subfloat[Iterative method]{
		\begin{tikzpicture}
			\begin{pgfinterruptboundingbox}
			\begin{axis}[stack plots=y,
					area cycle list,
					area legend,
					ymajorgrids=true,
					ymax=1.7e-2,
					enlarge y limits=false,
					ylabel={time (s)},
					xlabel={number of edges removed},
					thick,
					legend cell align=left,
					legend style={font=\small,at={(1.01,0.5)},anchor=west,draw=none},
					legend entries={Preprocessing,Augmentation,Solution}
				]
				\addplot table [x=remove,y=gmres_pp] {case777646.dat} \closedcycle;
				\pgfplotsset{cycle list shift=1}
				\addplot table [x=remove,y=gmres_aug] {case777646.dat} \closedcycle;
				\addplot table [x=remove,y=gmres_sol] {case777646.dat} \closedcycle;
			\end{axis}
			\end{pgfinterruptboundingbox}
			\useasboundingbox
				(current axis.below south west)
				rectangle (current axis.above north east) + (1,0);
		\end{tikzpicture}\label{fig:777646bd_a}
	}\\
	\subfloat[Direct method]{
		\begin{tikzpicture}
			\begin{pgfinterruptboundingbox}
			\begin{axis}[stack plots=y,
					area cycle list,
					area legend,
					ymajorgrids=true,
					ymax=1.7e-2,
					enlarge y limits=false,
					ylabel={time (s)},
					xlabel={number of edges removed},
					thick,
					legend cell align=left,
					legend style={font=\small,at={(1.01,0.5)},anchor=west,draw=none},
					legend entries={Preprocessing,Matrix Formation,Augmentation,Solution}
				]
				\addplot table [x=remove,y=direct_pp] {case777646.dat} \closedcycle;
				\addplot table [x=remove,y=direct_mat] {case777646.dat} \closedcycle;
				\addplot table [x=remove,y=direct_aug] {case777646.dat} \closedcycle;
				\addplot table [x=remove,y=direct_sol] {case777646.dat} \closedcycle;
			\end{axis}
			\end{pgfinterruptboundingbox}
			\useasboundingbox
				(current axis.below south west)
				rectangle (current axis.above north east) + (1,0);
		\end{tikzpicture}\label{fig:777646bd_b}
	}
	\caption{Breakdown of the time for the $777,646$-bus generated system}
  \label{fig:777646bd}
\end{figure}
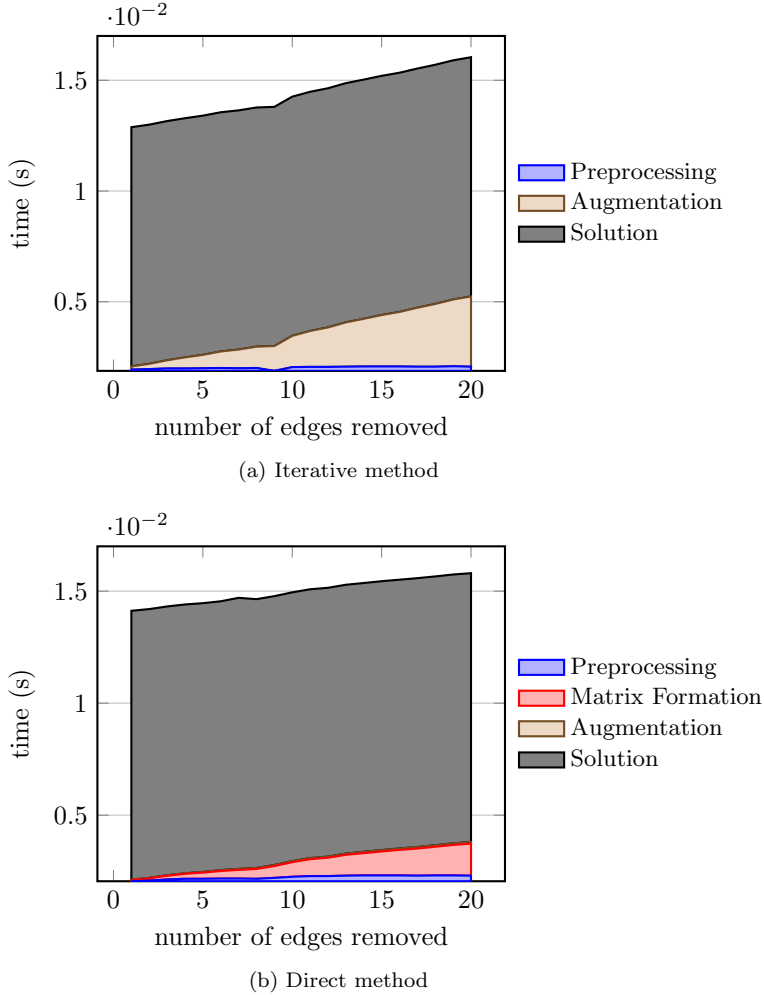

\subsection{Implementation}
All experiments were conducted on a desktop computer with four 8-core 
Intel Xeon E5-2670 processors running at 2.6 GHz with 20 GB cache and 256 GB RAM.
All reported times represent the average of 20 runs.

The precomputed LDL$^\top$ factorizations of the admittance matrices were 
computed using Oblio, a direct solver library for solving sparse symmetric
linear systems of equations with data structure support for dynamic pivoting
using $1 \times 1$ and $2 \times 2$ pivots~\cite{dobrian}. Both the GMRES
iterative solver used in \cref{eq:S1} and the PARDISO solver applied to
\cref{eq:mod} for comparison purposes were from the Intel Math Kernel Library
(MKL)~\cite{mkl}. The CHOLMOD solver applied to \cref{eq:mod} was from the
SparseSuite package. The remainder of the code was written by the authors.

All matrices were stored in sparse matrix format to reduce both the storage
space and access time. 

\begin{table}[htbp]
	\begin{tabularx}{\textwidth}{m{0.89in}XXXX}
		\toprule
		Problem & Aug. Direct & Aug. GMRES & PARDISO & CHOLMOD\\\midrule
		$3,120$-bus Polish system & $2 \times 10^{-13}$ & $3 \times 10^{-13}$ & $2 \times 10^{-13}$ & $2 \times 10^{-13}$\\
		$14,070$-bus WECC system  & $4 \times 10^{-13}$ & $4 \times 10^{-13}$ & $4 \times 10^{-13}$ & $5 \times 10^{-13}$\\
		$777,646$-bus system  & $6 \times 10^{-12}$ & $6 \times 10^{-12}$ & $6 \times 10^{-12}$ & $5 \times 10^{-12}$\\\bottomrule
	\end{tabularx}
	\label{tab:error}
	\caption{Average relative residual norms $(\big\|\hat{A}\vb{\hat{x}}-\vb{\hat{b}}\big\|_2/\big\|\vb{\hat{b}}\big\|_2)$ for each problem}
\end{table}

\subsection{Experiments}
In our $N - k$ contingency analysis experiments we remove $k$ out of $N$
connections in the power grid and form the modified system 
\cref{eq:mod}. This corresponds to a principal submatrix update as
described in \cref{eq:update}, where $H$ is formed by the columns of the
identity matrix corresponding to the end-points of the removed connections,
and $m \le 2k$.

We compare the solution of the augmented system using an iterative solver 
on \cref{eq:S1} and using \cref{eq:x3} by means of the LU factorization
of the Schur complement matrix $S_2$. 
Note that since $B$ in \cref{eq:DC} is a weighted Laplacian, the update matrix
$E$ at the $(1,1)$-block of \cref{eq:S1} is singular,
and thus the whole matrix is symmetric indefinite.
We have used the MINRES method and the generalized minimum residual (GMRES) 
method to solve these indefinite systems. 
For MINRES, the average solve time per iteration is faster than
the GMRES method, but since it converged slowly and needed more iterations
than GMRES, the total solve time was higher than the latter. 
Hence we report times obtained from GMRES. 
We also compare our augmented system with PARDISO and CHOLMOD being applied
to \cref{eq:mod}.

The LDL$^\top$ factorization times using Oblio were $0.0306$ seconds 
for the $3,120$-bus Polish system,
$0.156$ seconds for the $14,070$-bus WECC system, and $2.53$ seconds for 
the $777,646$-bus generated system.
In comparison, the average factorization times using PARDISO were 
$0.00735$ seconds for the $3,120$-bus Polish system,
$0.039$ seconds for the $14,070$-bus WECC system, 
and $2.37$ seconds for the $777,646$-bus generated system.
Although Oblio did not perform as well as PARDISO on the smaller problems, 
it provides the ability to extract the factors, which is essential for
closure computation and the sparsity-exploiting triangular solves.

In \cref{fig:time}, we plot the time to compute the updated solution 
when up to $20$ edges are removed from the grid. 
The augmented methods outperform
both PARDISO and CHOLMOD on all three power grids. 
The time taken by  PARDISO for the $777,646$-bus generated system is not 
plotted in \cref{fig:time_c} to better differentiate  
the relative performance of  our methods with CHOLMOD. 
For this large grid, PARDISO took approximately $2.4$ seconds for solving
each modified system, which is two orders of magnitude 
($149-186$ times) slower than our augmented iterative method.
In comparison, CHOLMOD computed the solutions $1.47-5.09$ times slower
than our augmented iterative method.
We also observe that the augmented  methods scale much better than CHOLMOD 
as the number of edges removed (size of the updates) increases.
However, the number of fill-ins, if any, introduced by CHOLMOD
is insignificant, as we can see from \cref{fig:cholmod_bd} that only
the factor update time increases when the number of edges removed increases.

\Cref{fig:3120bd,fig:777646bd} show the breakdown of the total time 
used in solving the updated systems using our augmented direct 
and iterative methods on the $3,120$-bus Polish
system and the $777,646$-bus generated system. 
Here \textit{Preprocessing} is the step of computing the
closure of the modified rows and columns in the graph of $G(L)$, 
and extracting the necessary submatrix of $L$ for solving for 
$\vb{\hat{x}}_3$ in \cref{eq:S1,eq:x3}. 
\textit{Augmentation} refers to the step  of solving 
\cref{eq:S1} for the iterative method 
and \cref{eq:x3} for direct method. 
\textit{Matrix Formation} corresponds to the step of forming $W$ as 
described in \Cref{sec:formS2}. 
\textit{Solution} is the step of computing  the solution to 
the modified system in \cref{eq:sol}.

It can be seen that for a small system the time is
dominated by the augmentation part in \cref{eq:S1}
(Step 6 in \cref{tab:summary}) for the iterative method,
or by the matrix formation of the reduced system in 
\cref{eq:x3} (Step 3 in \cref{tab:summary}) in the direct method. 
Hence the product of $m$ or $t$ (the number of steps of the iterative solver)
with $O(\rho)$ is the dominant term. 
On the other hand, for a large system the time is dominated 
by the computation of $\vb{\hat{x}}$ in \cref{eq:sol} 
(Step 7 in \cref{tab:summary}), which has $O(|L|)$ time complexity. 

The experimental results also indicate that the augmented solution methods do
not lead to difficulties with solution accuracy. \cref{tab:error} summarizes the
average relative residual norms for the solutions computed by our augmented methods,
PARDISO and CHOLMOD.

\section{Conclusions and future work}
\label{sec:conclusions}
We have formulated two algorithms using an augmented matrix approach to solve
linear systems of equations when the system is updated by a principal
submatrix. The algorithms use either a direct method or a hybrid of direct and
iterative methods. We applied the algorithms to assess the security of power
grids, and demonstrated that we could do $N - k$ contingency analysis by 
removing $k = 20$ connections in a grid with $778,000$  buses in
about 16 milliseconds.
The augmented solution methods have been experimentally shown to offer
advantages in both speed and reliability, relative to a 
direct solver (two orders of magnitude faster),
or a solver that updates the Cholesky factors ($1.5$ to $5$ times faster),
and scales better with an increasing $k$, the number of connections removed.
We believe that our algorithms are able to solve 
much larger dynamic security analysis problems in the power grid 
than previous work. 
 
In the future, we plan to extend our augmented solution method to problems
where the updated system of equations has a different size than the original
system, as in finite element applications~\cite{yeung}.


\section*{Acknowledgments}
We thank Dr.\ Jessica Crouch for collaborating with us 
in our earlier work on the augmented matrix approach to update solutions of 
linear systems of equations for visualizing and simulating surgery.
We also thank Dr.\ Mallikarjuna Vallem for providing us
the $777,646$-bus generated system for our experiments on contingency
analysis of power flow systems.
We are grateful to two anonymous referees for their helpful 
comments, which have improved the presentation of our manuscript.

\bibliographystyle{siamplain}
\bibliography{references}

\begin{thebibliography}{10}

\bibitem{bienstock}
{\sc D.~Bienstock}, {\em Electrical Transmission System Cascades and
  Vulnerability}, Society for Industrial and Applied Mathematics, 2016, ch.~3,
  \url{http://dx.doi.org/10.1137/1.9781611974164.ch3}.

\bibitem{bisschop1977}
{\sc J.~Bisschop and A.~Meeraus}, {\em Matrix augmentation and partitioning in
  the updating of the basis inverse}, Mathematical Programming, 13 (1977),
  pp.~241--254, \url{http://dx.doi.org/10.1007/BF01584341}.

\bibitem{bisschop1980}
{\sc J.~Bisschop and A.~Meeraus}, {\em Matrix augmentation and structure
  preservation in linearly constrained control problems}, Mathematical
  Programming, 18 (1980), pp.~7--15,
  \url{http://dx.doi.org/10.1007/BF01588292}.

\bibitem{davis}
{\sc T.~A. Davis and W.~W. Hager}, {\em Row modifications of a sparse
  {Cholesky} factorization}, SIAM Journal on Matrix Analysis and Applications,
  26 (2005), pp.~621--639, \url{http://dx.doi.org/10.1137/S089547980343641X}.

\bibitem{dobrian}
{\sc F.~Dobrian and A.~Pothen}, {\em Oblio: Design and performance}, in Applied
  Parallel Computing. State of the Art in Scientific Computing, J.~Dongarra,
  K.~Madsen, and J.~Wasniewski, eds., vol.~3732 of Lecture Notes in Computer
  Science, Springer Berlin Heidelberg, 2006, pp.~758--767,
  \url{http://dx.doi.org/10.1007/11558958_92}.

\bibitem{gilbert}
{\sc J.~Gilbert}, {\em Predicting structure in sparse matrix computations},
  SIAM J. Matrix Analysis and Applications, 15 (1994), pp.~62--79,
  \url{http://dx.doi.org/10.1137/S0895479887139455}.

\bibitem{gill}
{\sc P.~E. Gill, W.~Murray, M.~A. Saunders, and M.~H. Wright}, {\em Sparse
  matrix methods in optimization}, SIAM Journal on Scientific and Statistical
  Computing, 5 (1984), pp.~562--589, \url{http://dx.doi.org/10.1137/0905041}.

\bibitem{mkl}
{\sc {Intel Corporation}}, {\em {Math Kernel Library Developer Reference}},
  2015, \url{https://software.intel.com/en-us/articles/mkl-reference-manual}.

\bibitem{kersting}
{\sc W.~H. Kersting}, {\em Radial distribution test feeders}, in Power
  Engineering Society Winter Meeting, 2001. IEEE, vol.~2, 2001, pp.~908--912
  vol.2, \url{http://dx.doi.org/10.1109/PESW.2001.916993}.

\bibitem{kuzmin}
{\sc A.~Kuzmin, M.~Luisier, and O.~Schenk}, {\em Fast methods for computing
  selected elements of the {Green's} function in massively parallel
  nanoelectronic device simulations}, in Euro-Par 2013 Parallel Processing,
  F.~Wolf, B.~Mohr, and D.~Mey, eds., vol.~8097 of Lecture Notes in Computer
  Science, Springer Berlin Heidelberg, 2013, pp.~533--544,
  \url{http://dx.doi.org/10.1007/978-3-642-40047-6_54}.

\bibitem{liu}
{\sc J.~W.~H. Liu}, {\em The role of elimination trees in sparse
  factorization}, SIAM Journal on Matrix Analysis and Applications, 11 (1990),
  pp.~134--172, \url{http://dx.doi.org/10.1137/0611010}.

\bibitem{maes}
{\sc C.~Maes}, {\em A Regularized Active-set Method for Sparse Convex Quadratic
  Programming}, PhD thesis, Stanford University, Nov. 2010,
  \url{https://web.stanford.edu/group/SOL/dissertations/maes-thesis.pdf}.

\bibitem{schenk2008}
{\sc O.~Schenk, M.~Bollh\"{o}fer, and R.~A. R\"{o}mer}, {\em On large-scale
  diagonalization techniques for the {Anderson} model of localization}, SIAM
  Review, 50 (2008), pp.~91--112, \url{http://dx.doi.org/10.1137/070707002}.

\bibitem{schenk2007}
{\sc O.~Schenk, A.~W\"{a}chter, and M.~Hagemann}, {\em Matching-based
  preprocessing algorithms to the solution of saddle-point problems in
  large-scale nonconvex interior-point optimization}, Computational
  Optimization and Applications, 36 (2007), pp.~321--341,
  \url{http://dx.doi.org/10.1007/s10589-006-9003-y}.

\bibitem{wong}
{\sc E.~Wong}, {\em Active-Set Methods for Quadratic Programming}, PhD thesis,
  University of California, San Diego, June 2011,
  \url{http://ccom.ucsd.edu/~elwong/p/elw-thesis.pdf}.

\bibitem{yeung}
{\sc Y.-H. Yeung, J.~Crouch, and A.~Pothen}, {\em Interactively cutting and
  constraining vertices in meshes using augmented matrices}, ACM Transactions
  on Graphics, 35 (2016), pp.~18:1--18:17,
  \url{http://dx.doi.org/10.1145/2856317}.

\bibitem{zimmerman}
{\sc R.~D. Zimmerman, C.~E. Murillo-Sanchez, and R.~J. Thomas}, {\em
  {MATPOWER}: Steady-state operations, planning, and analysis tools for power
  systems research and education}, IEEE Transactions on Power Systems, 26
  (2011), pp.~12--19, \url{http://dx.doi.org/10.1109/TPWRS.2010.2051168}.

\end{thebibliography}
\end{document}